\newtcolorbox[auto counter]{tbox}[2][]{%
    enhanced, float=hbt, drop fuzzy shadow southeast,
    colback=white!5!white, colframe=white!50!black,
    width= .97\columnwidth,sharp corners, boxrule=0.8pt,
    title={Box \thetcbcounter: #2}, #1
}
\newtheorem{theorem}{Theorem}
\newtheorem{lemma}{Lemma}
\newtheorem{corollary}{Corollary}
\newtheorem{definition}{Definition}
\begin{document}
\title{Quantum Network: Security Assessment and Key Management}

\author{Hongyi Zhou}
\author{Kefan Lv}
\author{Longbo Huang}
\email{longbohuang@tsinghua.edu.cn}
\author{Xiongfeng Ma}
\email{xma@tsinghua.edu.cn}
\affiliation{Institute for Interdisciplinary Information Sciences, Tsinghua University, Beijing 100084, China}

\begin{abstract}
As an extension of quantum key distribution, secure communication among multiple users is an essential task in a quantum network. When the quantum network structure becomes complicated with a large number of users, it is important to investigate network issues, including security, key management, latency, reliability, scalability, and cost. In this work, we utilize the classical and graph theories to address two critical issues in a quantum network: security and key management. First, we design a communication scheme with the highest security level that trusts a minimum number of intermediate nodes. Second, when the quantum key is a limited resource, we design key management and data scheduling schemes to optimize the utility of data transmission. Our results can be directly applied to the current metropolitan and free-space quantum network implementations and potentially be a standard approach for future quantum network designs.
\end{abstract}

\maketitle

\section{Introduction}\label{sec:introduction}
As a principal part of quantum cryptography, quantum key distribution (QKD) allows remote communication parties to share identical and private keys for encryption and decryption \cite{Bennett1984Quantum,Ekert1991Quantum}, whose information-theoretical security is guaranteed by the fundamental principles of quantum mechanics \cite{Lo1999Unconditional,Shor2000Simple}. The practical implementation of QKD has a booming development since the beginning of this century. For the most popularly applied photon source --- highly attenuated weak coherent state light, the decoy-state method \cite{Hwang2003Decoy,lo2005decoy,Wang2005Decoy} addresses security issues caused by the information leakage of multi-photon components. Since then, many long-distance QKD experiments have been demonstrated around the world \cite{Zhao2006ExpDecoy,PhysRevLett.98.010503,PhysRevLett.98.010504,PhysRevLett.98.010505, Liu:10,boaron2018secure}. In the meantime, the measurement-device-independent quantum key distribution (MDI-QKD) protocol has been proposed to address the detection loophole problems \cite{lo2012mdiqkd}, which has been demonstrated both in the lab \cite{Liu2013ExpMDI,Tang2013experimental,Silva2013DemoPolMDIQKD,tang2014measurement, yin2016measurement} and in the field \cite{Rubenok2013MDIQKDFielfTest,Tang2015MDIField}. Recently, theoretical development on MDI-QKD shows that one can further double the secure communication distance \cite{Ma2018Phase,Lin2018simple}. All these developments suggest that point-to-point QKD over hundreds of kilometers is ready for real-life implementation.

The initial proposal of QKD deals with a two-user communication scenario. In practice, one needs to extend point-to-point links to a network. There are in principle two types of quantum networks according to the way of connection \cite{salvail2010security}: one is optically switched quantum networks realized by classical optical functions; the other is repeater-based networks. The quantum-repeater-based networks are fully quantum networks enabling multi-partite entanglement distribution \cite{epping2017multi,takeoka2019multipartite,das2019universal}. While in practical implementations, restricted by the current technology, it is the classical repeaters, i.e., trusted intermediate nodes together with optical switches that composites different topological network structures. To this day, there have been a number of experimental demonstrations on the field test of quantum networks. Several testing implementations of quantum networks have been realized in the China, Europe, Japan, and USA \cite{elliott2005darpa,peev2009secoqc,chen2009field,sasaki2011field,wang2014field}. Today, the topological structures of QKD networks have become more complex than the early ones \cite{elliott2005darpa,peev2009secoqc,sasaki2011field}, such as the mesh structure in 46-node Hefei network, and the star-type structure in MDI-QKD network \cite{PhysRevX.6.011024}. Besides these fiber-based quantum networks, a satellite-relayed quantum network has been realized recently \cite{liao2018satellite}, in which a secret key was exchanged between intercontinental communication partners.
In the mean time, researchers explore the feasibility of hybridizing discrete variable schemes with continuous variable ones \cite{PhysRevA.94.042340,elmabrok2018quantum} in a quantum network and integrating QKD into classical networks, such as utilizing wavelength division multiplexing technique \cite{chapuran2009optical,qi2010feasibility}.

The ultimate goal of quantum communication is to realize large scale quantum networks. There are a few major challenges that a quantum network faces, including (i) designing the proper topological structure, (ii) assessing the security levels, and (iii) managing secure keys. Recently, an engineering framework of a scalable multi-site quantum network has been established \cite{tysowski2018engineering}, in which a QKD system is divided into multiple layers: host layer, key management layer, QKD network layer, and quantum link layer. The core of designing and operating a quantum network lies in the key management layer and QKD network layer, where the issues of security, key management \cite{pattaranantakul2012secure}, data routing \cite{yang2017qkd} and stability should be dealt with to realize the optimal transmission performance at a low cost.

In a quantum network, communication between two users, Alice and Bob, are often relayed by intermediate nodes. These nodes can be divided into two types, trusted nodes and untrusted nodes, depending on whether or not the security of communication relies on the security of the nodes, respectively. A trusted node executes full QKD process with adjacent nodes and announces the parities of the two key bit strings such that end users can share secret keys. For example, Alice and Bob establish keys, denoted by $k_a$ and $k_b$, with an intermediate trusted node. The trusted node announces $k_a\oplus k_b$, and eventually, Alice and Bob share a key $k_a$. Whereas, an untrusted node can be as simple as an optical switch, or it can be an untrusted measurement site used in MDI-QKD schemes. In a simple network structure, such as the MDI-QKD network \cite{PhysRevX.6.011024}, users can communicate without trusting any intermediate nodes. Since the security of communication does not depend on untrusted nodes, we can only consider the trusted nodes in the following discussions on security assessment. In this paper, we assume all users in the network are connected with insecure classical communication channels, which are treated as a free resource. We focus on the case where quantum keys are consumed for private communication.

Security is a crucial issue in a quantum network, which may be compromised if an adversary, Eve, can manipulate or crack intermediate trusted nodes. In reality, it is important to evaluate security if Eve can at least compromise one of the nodes. In the extreme case where Eve can hack all the intermediate nodes, no secure communication can be established. Thus, we first consider the interesting problem of security assessment when a certain amount of the nodes are compromised. In particular, for a general network structure, we find an optimal communication scheme using the graph representation of a quantum network \cite{PhysRevA.95.012304,pirandola2019end}. With the highest security level, the communication is secure unless the nodes that eavesdropper compromises form a cut in the graph of the quantum network, which is the ultimate solution of an efficient attack strategy by the adversary.

Another issue addressed in this work is the data routing and key management in the network. In private key systems such as QKD, the encoding process may consumes keys with the same length as the message. With the current quantum technology, the key generation speed (10 Mbps) is far below the speed of classical data transmission (1 Tbps). Thus, the key is a limited resource in a network for most communication tasks. In a network with multiple communication tasks, the lack of key management will lead to instability and inefficiency. To address this issue and optimize network management, we adopt techniques addressing similar problems in the classical network research. Specifically, we formulate the problem of QKD-based network communication as a flow scheduling problem in resource-constrained networks, such as processing networks \cite{Jiang2010Scheduling,Dai2008Asymptotic} and energy-harvesting networks \cite{Longbo2013Utility,Chen2014A}. In this formulation, each data transmission consumes supporting resources (in our case, quantum key bits), and the network operator needs to jointly optimize resource usage, data routing, and scheduling. Then, to solve the problem of key-constrained data transmission, we adopt the Lyapunov network optimization technique \cite{Georgiadis2006Resource} and design a key management and data scheduling algorithm that has low-implementation complexity. We also rigorously show that our algorithm achieves near-optimal performance in terms of data transmission utility.


%
%
%
%
\section{Security assessment} \label{Sec:Security}
The security in a quantum network lies in two aspects: quantum channel and intermediate nodes. The former has been well studied in the security analysis of QKD; while the latter is a new problem emerging in quantum networks. Trusted nodes can extend communication distances, keeping a relatively high key rate at the mean time. At a cost, the security of communication can be compromised by the trustworthiness of intermediate relay nodes. In practice, an important task is to design a key exchange procedure, so that it can tolerate the maximal number of compromised nodes. We define the tolerance of the compromised nodes as security level. Our target is to find a communication scheme that can tolerate as many compromised node as possible, i.e., with the highest security level.

In this section, we first present our network model. Then, we consider several simple communication schemes and provide the corresponding attack strategies. After that, we propose the strongest attacks that can hack all possible communication schemes. We find a communication strategy with the highest security level that is secure unless Eve performs the strongest attack.


\subsection{Network model}
In our model, we consider two networks. One is a classical network for data transmission, and the other is a quantum network for key generation. When two users Alice and Bob want to establish a communication, they first distribute secret keys via the quantum network and then encrypt the message and transmit it by the classical network. A quantum network can be represented by a graph $G=(\mathcal{N},\mathcal{L})$, where $\mathcal{N}$ and $\mathcal{L}$ are the sets of vertices and edges, respectively. Here a vertex $c\in\mathcal{N}$ represents a basic unit in a quantum network, which can be a node or a QKD sub-network whose internal structure is unrelated to the security assessment. An edge in the graph represents a QKD link used to distribute secure key strings between connected nodes. While a classical network is represented by another graph $G^\prime=(\mathcal{N},\mathcal{L^\prime})$ sharing the same vertices with the quantum one. Edges in the graph $G^\prime$ represent classical links, which may be different from the QKD links. In this section, we assume the classical links can be freely used and hence neglect the classical communication efficiency in the following discussions.

We focus on the security of nodes and trust the security of QKD links, i.e., we assume the QKD process has been completed and secret keys have been generated. For example, in this model, the untrusted measurement site in MDI-QKD is merged into the QKD links in as an edge in the graph $G$ . We have the following assumptions for the adversary Eve.  a) Eve has access to all classical channels. b) Eve has no information about the quantum key of an edge if she does not compromise either of the connected nodes.  c) Eve learns everything of the quantum key if she compromises at least one of the connected nodes. We present a toy quantum network $G$ in Fig.~\ref{fig:strategy1}, where Alice and Bob are two communicating parties. There are $5$ intermediate nodes between them denoted by $c_1\sim c_5$, and $9$ edges $k_1\sim k_9$, each representing a quantum channel or quantum key strings generated between the connected nodes.
\begin{figure}[hbt]
\centering
\resizebox{8cm}{!}{\includegraphics{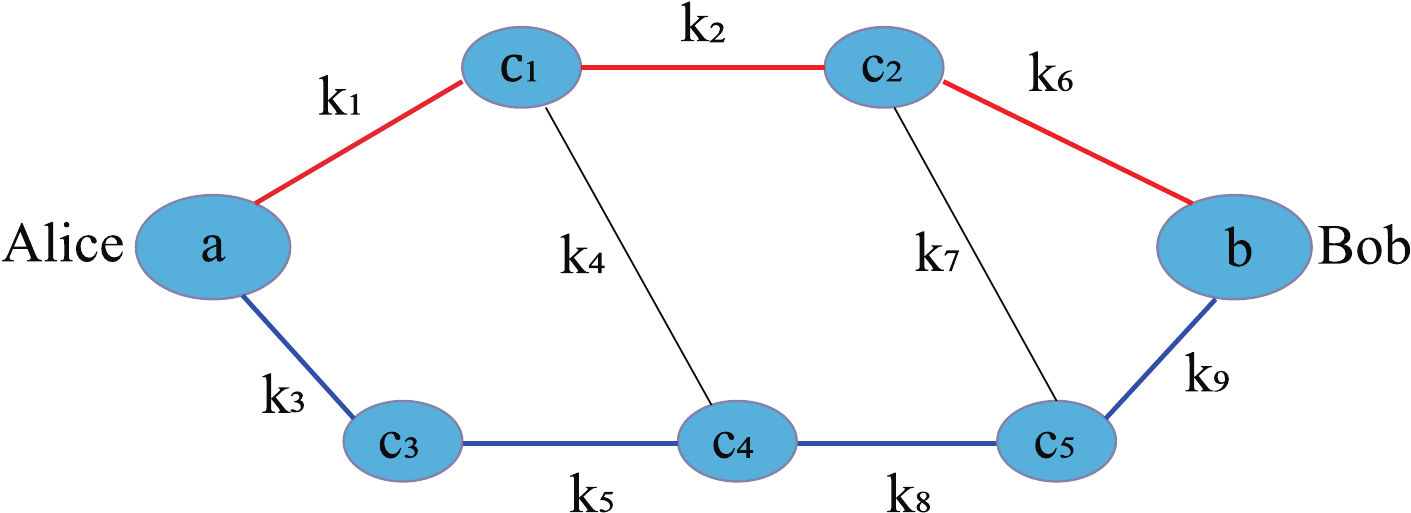}}
\caption{Graph representation of a quantum network. Alice and Bob are the two communicating parties denoted by $a$ and $b$, respectively. There are $5$ intermediate nodes between them, denoted by $c_1\sim c_5$. Each edge represents a quantum channel or quantum key strings generated between the connected nodes, denoted by $k_1\sim k_9$.}
\label{fig:strategy1}
\end{figure}

Here, we use the path concept in the graph theory to describe the sequence of nodes used in message transmission. We only consider simple paths here since any loop of the message transmission is useless in a network. Take the red line in Figure \ref{fig:strategy1} as an example. Once Alice and Bob pick a path, all intermediate nodes are fixed, $a\rightarrow c_1\rightarrow c_2 \rightarrow b$. There are two means for key sharing. One is that Alice and Bob ask all the intermediate nodes ($c_1$ and $c_2$ in this example) to announce the parities from the exclusive-or (XOR) operations on the key bit strings with their two neighbors in the path. In the example, $c_1$ announces $k_1\oplus k_2$ and $c_2$ announces $k_2\oplus k_6$. Then Bob can share the same key with Alice by calculating the parities. Alternatively, $c_2$ will first calculate Alice’s key with the parity announced by $c_1$, encrypt Alice’s key with $k_6$, and sends it to Bob. In this method, Alice’s key can be regarded as a message. In theory, both ways of private communication are equally secure. The first method uses fewer encryptions and decryptions. Also, the intermediate nodes do not obtain the final key directly. However, in practice, the first method will lead to low communication efficiency since all the intermediate nodes should publicly announce their parity bits for the key exchange of end-users. The second method is similar to the data transmission in a realistic network, i.e., a node will receive an encrypted message from one of its neighbors and a request to convey the message to another neighbor privately. The difference between the two methods will vanish if we neglect classical communication efficiency. Therefore, we do not distinguish these two methods in the following discussions.

\subsection{Multi-path communication scheme and strongest attack}
Let us begin with a simple case with only one single-line path, which is represented by the red line $a\rightarrow c_1\rightarrow c_2 \rightarrow b$ in Figure~\ref{fig:strategy1}. Alice sends the message to her neighbour relay node encrypted with the quantum key. The message is decrypted and re-encrypted by intermediate nodes and finally received by Bob. This is a strategy that consumes the least amount of keys. In terms of security, this scheme can be weak because once Eve cracked any node on the path, she gets the message.

In order to strengthen the single-path strategy, one can introduce an additional disjoint path, the blue line $a\rightarrow c_3\rightarrow c_4 \rightarrow c_5 \rightarrow b$ shown in Figure \ref{fig:strategy1}, to defend the single-point eavesdropping attack. The second path is used to transmit another independent random bit string. The final message is the XOR result of the two strings transmitted via the two paths. The details of the communication scheme is shown in Table \ref{tab:2path}.

\begin{table}
\caption{Two-path communication scheme. The two paths are the red line $a\rightarrow c_1\rightarrow c_2 \rightarrow b$ and the blue line $a\rightarrow c_3\rightarrow c_4 \rightarrow c_5 \rightarrow b$ shown in Figure \ref{fig:strategy1}.}
\label{tab:2path}
\begin{framed}
\begin{enumerate}
\item
Alice holds the message $x$ and generates a random bit string $y$ locally with the same length $|x|=|y|$.
\item
Alice sends the message $x \oplus y$ to her neighbor node $c_1$ and $y$ to $c_3$, encrypted by quantum keys.
\item
The node $c_1$ sends $x \oplus y$ through the red path to $c_2$, and eventually to Bob. Similarly, $c_3$ sends $y$ through the blue path to $c_4$, to $c_5$, and eventually to Bob.
\item
Finally, Bob receives $y$ and $x \oplus y$ from the two different paths. He obtains the message $x$ by applying an XOR operation, $x=x \oplus y\oplus y$.
\end{enumerate}
\end{framed}
\end{table}

Suppose Eve can only successfully hack one of the intermediate nodes, she can only learn $y$ or $x \oplus y$, and hence the transmission of $x$ is still secure. In fact, if Eve can only hack the nodes in one of the two paths (red or blue), the transmission is secure. Only when Eve can hack nodes from both paths, say $c_2$ and $c_3$, she can eavesdrop the message. Obviously, the two-path scheme is securer than the single-path one in practice.

Generally, we can increase the number of paths to increase communication security. The two-path scheme shown in Box \ref{tab:2path} can be generalized to a multi-path scheme. In an $n$-path scheme, Alice sends $x \oplus y_1 \oplus y_2 \cdots \oplus y_{n-1}, y_1, y_2, ..., y_{n-1}$ by $n$ paths. Then it follows step 2, 3 and 4 in Table~\ref{tab:2path}. Finally, Bob can recover the message $x$. Sometimes, adding a path may not increase security. For example, in the aforementioned two-path strategy, adding the third path, $a\rightarrow c_1\rightarrow c_4\rightarrow c_5 \rightarrow b$, cannot enhance security since any hacking strategy that can successfully break the security of the two-path scheme will also break this three-path scheme. Our target is to design a robust communication scheme against as many compromised nodes as possible.

Now, we model communication schemes and define the security levels regarding a quantum network formally. In a quantum network represented by a graph $G=(\mathcal{N},\mathcal{L})$, we denote $\mathcal{M}$ to be the set of paths used in a communication scheme, which has one-to-one correspondence to the communication scheme, and denote $\mathcal{A}\subseteq \mathcal{N} \setminus \{a,b\}$ the set of compromised nodes, which uniquely determines Eve's hacking strategy. We introduce a Boolean function $sec(\mathcal{A}, \mathcal{M})$ of a communication scheme and hacking strategy, which is defined as follows.

\begin{definition} \label{def:sec}
For a communication scheme $\mathcal{M}$ and Eve's strategy $\mathcal{A}$, if $\mathcal{M}$ is secure against compromised nodes in $\mathcal{A}$, then $sec(\mathcal{A}, \mathcal{M})=1$. Otherwise, $sec(\mathcal{A}, \mathcal{M})=0$.
\end{definition}

We can see that $sec(\mathcal{A}, \mathcal{M})=0$ if and only if each path in $\mathcal{M}$ goes through at least one nodes in $\mathcal{A}$. In other words, $sec(\mathcal{A}, \mathcal{M})=1$ if and only if there is at least one path in $\mathcal{M}$ which does not go through any nodes in $\mathcal{A}$. We define the security level as follows.
\begin{definition} \label{def:seclevel}
Given a network $G$, the source and sink nodes $a$ and $b$, and the communication scheme $\mathcal{M}$, the security level of $\mathcal{M}$ is the optimal value of the following maximization problem,
\begin{equation}
\begin{aligned}
&\max_{\mathcal{A}} |\mathcal{A}| \\
\mathrm{s.t.} \quad & sec(\mathcal{A}, \mathcal{M})=1.
\end{aligned}
\end{equation}
\end{definition}

We also define a strongest attack to be the most powerful attack that can successfully hack all possible communication schemes.
\begin{definition}
A strongest attack $\mathcal{A}^{st}$ can successfully hack all possible communication schemes,
\begin{equation}\label{Eq:defAst}
\begin{aligned}
\forall \mathcal{M}, \quad sec(\mathcal{A}^{st},\mathcal{M})=0.
\end{aligned}
\end{equation}
\end{definition}

By definition, we see that a strongest attack should contain at least one node of each possible path between Alice and Bob. When Eve compromises a node, we assume that she knows all the keys distributed from (and to) this node. From a security point of view, one can think of Eve making those connecting edges insecure. Given an attack $\mathcal{A}$, define $\mathcal{L}_A\subseteq \mathcal{L}$ as the set of insecure edges caused by this attack. 
If Alice and Bob cannot be connected by a path without using any edges in $\mathcal{L}_A$, no secure path can be found under this attack and such attack is strongest. Thus, we have the following theorem.

\begin{theorem} \label{Thm:StrongCut}
Attack $\mathcal{A}$ is strongest if and only if Alice and Bob belongs to different disjoint subsets partitioned by a cut-set contained in $\mathcal{L}_A$.
\end{theorem}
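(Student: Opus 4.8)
The plan is to reduce the node-based security condition to a purely edge-based reachability statement, and then invoke the elementary duality between connectivity and edge cuts. The first step I would take is to establish the following equivalence for a single Alice--Bob path $P$: we have $sec(\mathcal{A},\{P\})=1$ (i.e.\ $P$ avoids every node in $\mathcal{A}$) if and only if $P$ uses no edge of $\mathcal{L}_A$. The forward direction is immediate, since if $P$ contains no node of $\mathcal{A}$ then every edge of $P$ has both endpoints outside $\mathcal{A}$ (recall that Alice and Bob are excluded from $\mathcal{A}$), so none of these edges lies in $\mathcal{L}_A$. For the converse I would use adversary assumption (c): any intermediate node $c\in\mathcal{A}$ on $P$ forces the two path-edges incident to $c$ into $\mathcal{L}_A$, so a path avoiding $\mathcal{L}_A$ entirely must avoid $\mathcal{A}$.

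Next I would translate the definition of a strongest attack through this equivalence. By Definition~\ref{def:sec} and the characterization following it, $sec(\mathcal{A},\mathcal{M})=1$ holds precisely when some path in $\mathcal{M}$ avoids $\mathcal{A}$; restricting attention to single-path schemes then shows that $\mathcal{A}$ fails to be strongest if and only if there exists at least one Alice--Bob path avoiding $\mathcal{A}$. Combining this with the first step, $\mathcal{A}$ is strongest if and only if there is \emph{no} Alice--Bob path using only edges of $\mathcal{L}\setminus\mathcal{L}_A$; equivalently, Alice and Bob lie in different connected components of the subgraph $G'=(\mathcal{N},\mathcal{L}\setminus\mathcal{L}_A)$ obtained by deleting all insecure edges.

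Finally I would close the argument with the standard graph-theoretic fact characterizing disconnection by cuts. If Alice and Bob are disconnected in $G'$, let $S$ be the component of $G'$ containing Alice; then Bob $\in\mathcal{N}\setminus S$, and every edge of $G$ crossing the partition $(S,\mathcal{N}\setminus S)$ must have been deleted, hence lies in $\mathcal{L}_A$. This crossing set is exactly a cut-set contained in $\mathcal{L}_A$ that places Alice and Bob in disjoint subsets. Conversely, if $\mathcal{L}_A$ contains such a cut-set, then deleting $\mathcal{L}_A$ severs every Alice--Bob path, so no secure path exists and $\mathcal{A}$ is strongest, which gives the ``if'' direction.

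I expect the only delicate point — to be stated carefully rather than a genuine obstacle — to be the converse in the first step, where one must invoke that compromising a node renders \textbf{all} of its incident edges insecure (assumption (c)); this is what makes node-avoidance and edge-avoidance coincide. Once that identification is in place, the remainder is the textbook equivalence between $s$--$t$ connectivity and the existence of a separating edge cut, so no deep machinery such as Menger's theorem or max-flow/min-cut duality is actually needed for the statement itself.
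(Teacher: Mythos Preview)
Your proposal is correct and follows essentially the same route as the paper: both reduce ``strongest attack'' to the nonexistence of an Alice--Bob path in the subgraph $(\mathcal{N},\mathcal{L}\setminus\mathcal{L}_A)$ and then take the component of Alice to exhibit the separating cut. The only difference is that you make the node-avoidance $\Leftrightarrow$ edge-avoidance equivalence explicit, whereas the paper absorbs it into the phrase ``secure path'' without further comment.
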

\emph{Proof.} Proof of ``if": a cut in the graph theory is a partition of the nodes into two disjoint subsets. It determines a cut-set, the set of edges whose two end nodes belongs to different subsets of the partition. Alice and Bob belongs to different subsets. Hence any path connecting Alice and Bob must have at least one edge that connect two nodes of different subset. From the definition, this edge belongs to the cut-set. That is, any path connecting them must contain at least one edge in the cut-set. Then, no secure communication is possible. The attack is strongest.

Proof of ``only if": we need to prove that if $\mathcal{L}_A$ contains no cut-set, there must be a secure path between Alice and Bob. Consider the set of nodes that have secure paths to Alice, if Bob belongs to this set, the proof is done by finding the secure path. If Bob does not belong to it, this set and its compliment set are two disjoint subsets. This partition is a cut. The cut-set must be contained in $\mathcal{L}_A$ and hence $\mathcal{A}$ is strongest.

From the theorem, we can have the following corollary.

\begin{corollary} \label{Cor:noStPath}
If an attack is not strongest, there exists a secure path connecting Alice and Bob.
\end{corollary}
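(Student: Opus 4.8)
The plan is to derive the corollary as a direct consequence of Theorem \ref{Thm:StrongCut} together with the definitions of the strongest attack and of $sec$. The theorem is an \emph{if and only if} statement, so its contrapositive reads: attack $\mathcal{A}$ is \emph{not} strongest if and only if $\mathcal{L}_A$ contains no cut-set separating Alice and Bob. Thus my first step is to rewrite the hypothesis ``$\mathcal{A}$ is not strongest'' as the purely graph-theoretic condition that no cut-set placing Alice and Bob in different parts is contained in $\mathcal{L}_A$.

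Second, I would exhibit the secure path under this condition, reusing the construction from the ``only if'' part of the theorem's proof. Let $S$ be the set of nodes reachable from Alice by a secure path, i.e. a path using no edge of $\mathcal{L}_A$. If Bob $\in S$, the required secure path is produced and we are done. Otherwise the partition $(S,\mathcal{N}\setminus S)$ places Alice and Bob in different parts and therefore defines a cut; I claim its cut-set lies entirely in $\mathcal{L}_A$, contradicting the condition obtained in the first step.

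The verification of that claim is the only point that needs care, and it is the step I would treat as the (modest) crux. Take any edge $e=(u,v)$ crossing the partition, say $u\in S$ and $v\notin S$. If $e\notin \mathcal{L}_A$, then $e$ is secure, and appending $e$ to a secure Alice-to-$u$ path yields a secure Alice-to-$v$ path, forcing $v\in S$ --- a contradiction. Hence every crossing edge belongs to $\mathcal{L}_A$, so the cut-set is contained in $\mathcal{L}_A$, contradicting non-strongness. This rules out the case Bob $\notin S$ and completes the argument.

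Finally, I note that the same conclusion can be reached without invoking the theorem at all, straight from the definitions: ``$\mathcal{A}$ is not strongest'' unfolds to $\exists \mathcal{M}$ with $sec(\mathcal{A},\mathcal{M})=1$, which by the characterization of $sec$ means some path in $\mathcal{M}$ avoids every node of $\mathcal{A}$; such a path is incident to no compromised node and hence uses no edge of $\mathcal{L}_A$, i.e. it is a secure path. I would present the theorem-based route as the main proof for consistency with the surrounding text, while this definitional route serves as an immediate check that no real obstacle is hidden here.
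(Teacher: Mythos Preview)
Your proposal is correct and matches the paper's approach: the paper gives no separate proof for the corollary, simply remarking that it follows from Theorem~\ref{Thm:StrongCut}, and your main route is precisely the contrapositive of the theorem's ``only if'' direction, reusing the same reachable-set construction. The additional definitional check you include is a nice sanity argument, but it goes beyond what the paper provides.
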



\subsubsection{Communication scheme of the highest security level}
Now, we want to study the most secure communication scheme. That is, such a scheme can tolerate any attacks that other schemes can tolerate. Denote the set of strongest attacks to be $\mathcal{A}^{st}$.

\begin{definition}
A communication scheme, $\mathcal{M}^{h}$, has the highest security level if
\begin{equation}\label{Eq:seclevel}
\begin{aligned}
\mathcal{M}^{h} : sec(\mathcal{A},\mathcal{M}^{h})=\left\{
\begin{aligned}
0 &   & \mathcal{A}\in \{ \mathcal{A}^{st} \} \\
1 &   &  otherwise
\end{aligned}
\right.
\end{aligned}
\end{equation}
\end{definition}

Here we propose a scheme $\mathcal{M}_0$ with the highest security level.
\begin{definition} \label{def:M0}
In the communication scheme $\mathcal{M}_0$, each node in the network except Alice and Bob broadcasts the parity (XOR result of all the keys from the neighbor). Bob receives all the parity information via unencrypted channels (available to Eve). By calculating the parity information, he can share a secret key with Alice.
\end{definition}

We take the network in Figure \ref{fig:strategy1} as an example. Here $c_1$ will announce $k_1 \oplus k_2 \oplus k_4$, $c_2$ will announce $k_2 \oplus k_6 \oplus k_7$, etc. Of course, Alice's and Bob's positions are symmetric. All the parity information can be sent to Alice. The scheme still works.

\begin{theorem}
Scheme $\mathcal{M}_0$, defined in Definition~\ref{def:M0}, is of the highest security level.
\end{theorem}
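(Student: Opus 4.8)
The plan is to first pin down operationally what $\mathcal{M}_0$ lets Alice and Bob share and what Eve can learn, then to reduce ``secure/insecure'' to a single linear condition over $GF(2)$, and finally to translate that condition into the cut picture of Theorem \ref{Thm:StrongCut}. In $\mathcal{M}_0$ every intermediate node $c$ broadcasts $s_c=\bigoplus_{e\ni c}k_e$, the XOR of the keys on its incident edges. Writing $S_a=\bigoplus_{e\ni a}k_e$ and $S_b=\bigoplus_{e\ni b}k_e$, I would XOR all the broadcasts: every edge with both ends intermediate appears twice and cancels, while the edges touching $a$ or $b$ survive once, so $\bigoplus_c s_c=S_a\oplus S_b$. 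Hence Bob recovers $S_a=S_b\oplus\bigoplus_c s_c$ from the public broadcasts together with his own keys, i.e. $S_a$ is the key that Alice and Bob share under $\mathcal{M}_0$. Therefore $sec(\mathcal{A},\mathcal{M}_0)=1$ exactly when Eve, knowing the keys of all insecure edges in $\mathcal{L}_A$ and every broadcast $s_c$, \emph{cannot} determine $S_a$; since all edge keys are independent and uniform, this is a purely linear question: Eve learns $S_a$ iff the parity functional of the edges at $a$ lies in the $GF(2)$-span of the insecure-edge indicators and the broadcast functionals.

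For the insecure direction I would assume $\mathcal{A}$ is strongest, so by Theorem \ref{Thm:StrongCut} $\mathcal{L}_A$ contains a cut-set separating $a$ from $b$; let $V_a\sqcup V_b$ be the corresponding partition with $a\in V_a$, $b\in V_b$. Summing the node parities over $V_a$ cancels every edge internal to $V_a$ and leaves exactly the cut-set edges, which rearranges to $S_a=\bigl(\bigoplus_{e\in\mathrm{cut}}k_e\bigr)\oplus\bigoplus_{c\in V_a\setminus\{a\}}s_c$. Every cut-set key is known to Eve because the cut-set is contained in $\mathcal{L}_A$, and every $s_c$ is public, so Eve reconstructs $S_a$ and $sec(\mathcal{A},\mathcal{M}_0)=0$.

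The crux is the secure direction. If $\mathcal{A}$ is not strongest, Corollary \ref{Cor:noStPath} supplies a secure path $P:a\to c_1\to\dots\to c_n\to b$ all of whose interior nodes are uncompromised. I would build a single indistinguishability witness: the key perturbation $\delta$ equal to $1$ on each edge of $P$ and $0$ elsewhere, and then check three facts. First, $\delta$ vanishes on all of $\mathcal{L}_A$, since each path edge has both endpoints uncompromised and is therefore secure. Second, $\delta$ leaves every broadcast unchanged, because each interior node $c_i$ meets exactly two path edges (contributing $1\oplus1=0$) while off-path nodes meet none. Third, $\delta$ flips $S_a$, because $a$ meets exactly one path edge. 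Thus the true key assignment and its perturbation by $\delta$ agree on everything Eve observes yet give opposite values of $S_a$, so Eve's posterior on $S_a$ is uniform and $sec(\mathcal{A},\mathcal{M}_0)=1$.

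Combining the two directions, $sec(\mathcal{A},\mathcal{M}_0)=0$ precisely when $\mathcal{L}_A$ contains an $a$--$b$ cut-set, which by Theorem \ref{Thm:StrongCut} is precisely when $\mathcal{A}$ is strongest; this is the defining property of the highest security level. The calculations involved (the telescoping of the broadcasts and the two-edge incidence count along $P$) are mechanical, so they are not the real difficulty. The main obstacle is conceptual: making rigorous that the security of the non-path scheme $\mathcal{M}_0$ is equivalent to the $GF(2)$ span condition --- that the shared object is exactly $S_a$ and that no nonlinear strategy helps Eve --- and then aligning this linear-algebraic picture with the graph-theoretic cut/path dictionary of Theorem \ref{Thm:StrongCut} and Corollary \ref{Cor:noStPath}. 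I would take care to construct the witness $\delta$ from a single secure path rather than a general flow, as this keeps the incidence bookkeeping in the invariance step trivial.
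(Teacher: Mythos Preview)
Your proposal is correct and follows the same three-part skeleton as the paper's own proof: telescope the broadcasts to show Alice and Bob both compute $S_a$, use the secure path from Corollary~\ref{Cor:noStPath} for the secure direction, and use the cut from Theorem~\ref{Thm:StrongCut} for the insecure direction. Where you diverge is in the execution of the two security directions. The paper's secure direction only says that ``one can think of $k_A$ as a secure random key bit string being transmitted from Alice to Bob with one-time pad encryption and being XOR with some extra random bit strings that might be known to Eve,'' without actually verifying that Eve's view is independent of $k_A$; its insecure direction is dismissed as ``obvious.'' Your $GF(2)$ formulation with the explicit indistinguishability witness $\delta$ (flip every key along the secure path) makes the secure direction genuinely rigorous---the three checks that $\delta$ vanishes on $\mathcal{L}_A$, preserves every broadcast $s_c$, and flips $S_a$ are exactly what is required to show Eve's posterior on $S_a$ is uniform---and your summation of node parities over $V_a$ gives an explicit reconstruction of $S_a$ for the insecure direction. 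What your approach buys is an actual information-theoretic security statement; what the paper's buys is brevity at the cost of leaving the heart of the argument informal.
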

\emph{Proof.} First, we need to show that this scheme can yield an identical key between Alice and Bob. On Alice's side, she performs the XOR operation to all the keys connected to her and obtains $k_A$. Upon receiving all the parity information from the network, Bob performs XOR operation on all the parity bit string along with his keys connected to his neighbors. Then, all the keys in this network appear in this XOR operation twice except those of the nodes connected directly with Alice. Thus, Bob's XOR result gives $k_A$ and all others are canceled out. In the end, they can achieve an identical key.

Then, we show that the generated key is secure for any attacks that are not strongest. If an attack is not strongest, from Corollary \ref{Cor:noStPath}, we can find a secure path between Alice and Bob. For the scheme $\mathcal{M}_0$, one can think of $k_A$ a secure random key bit string being transmitted from Alice to Bob with one-time pad encryption \cite{vernam1926cipher} and being XOR with some extra random bit strings that might known to Eve. Specifically, suppose the secure path is $a\rightarrow c_1\rightarrow c_2\rightarrow \cdots \rightarrow b$. Then Alice can send her random bit string via this path to Bob. In this case, she adds more unrelated random bit strings, which will not affect the security of the transmission.


Finally, it is obvious that $\mathcal{M}_0$ is insecure under a strongest attack, since it forms a cut between Alice and Bob.

At the end of this part, we have the following remarks. We notice that similar problems have been investigated in both classical \cite{dolev1993perfectly} and quantum networks \cite{salvail2010security}, where the relation between the maximum tolerable compromised nodes and the network connectivity is given. There is also a communication scheme proposed in \cite{beals2008distributed} with probabilistic information-theoretical security. In contrast, our results are independent of the connectivity and not probabilistic, i.e., our communication scheme is secure unless Eve performs the strongest attack. Another remark is that we assume the classical channel is free between any two nodes. Thus, there is no need to consider the efficiency of classical data transmission in the analysis. In practical cases, we need to consider the trade-off between network efficiency and security level. We leave this problem for future works.

\section{Utility optimization and key management} \label{Sc:Key}
When maximizing the security of the network in the previous section, we essentially assume that the key from QKD is sufficient for encryption. While in a practical quantum network, the amount of key is usually limited since QKD is normally far slower than classical communication. In this section, we consider the scenario where the quantum key is a limited source for multiple communication tasks. The problem becomes how to optimize certain network metrics through key management, data scheduling, and routing. For instance, we need to evaluate the encrypted data transmission capacity of a quantum network, i.e., how much data can be transmitted within a unit of time. Here, we borrow techniques in a classical energy harvesting network \cite{Longbo2013Utility}. The main difference is that the key (corresponding to the energy in an energy harvesting network) is defined over channels rather than nodes, which leads to different target functions and constraints in our optimization problem. In this section, we formulate a utility optimization problem to deal with the key management and data routing problem in a QKD network and find an efficient solution based on Lyapunov optimization techniques.

Again, we follow the graph theory expression $G=(\mathcal{N},\mathcal{L})$ to represent the network. Specifically, $a,b\in \mathcal{N}$ represent nodes and $l_{[a,b]}$ represents the link between $a$ and $b$. The time is discretized in the following discussions and $t$ is the index of the time slot. We summarize the notations in Table~\ref{tab:notation} and make the following remarks. The working condition of QKD $S_{[a,c]}(t)$ is a Boolean function and the key management strategy lies in the balance between $S_{[a,c]}(t)K_{[a,c]}$ and $P_{[a,c]}(t)$, representing the key generation and consumption, respectively. During data transmission, we only care about their destinations and classify the data accordingly. For example, we call the data flow with the final destination to node $b$ as type-$b$ data. Data scheduling is determined by $R_a^b(t)$, type-$b$ data admitted to $a$ at time $t$. Since secure data transmission needs encryption, it is given by a function of the key consumption, i.e., $\mu_{[a,c]}(t)=\mu_{[a,c]}(P_{[a,c]}(t))$. In particular, for the case of one-time pad encryption, $\mu_{[a,c]}(t)=P_{[a,c]}(t)$. The total data transmission on an edge $l_{[a,c]}$ is the sum of all types of data transmission, i.e., $\mu_{[a,c]}(t)=\sum_b\mu_{[a,c]}^b(t)$. The key generation rate $K_{[a,c]}$ is determined by the QKD setting between two adjacent nodes, $a$ and $c$. The key is stored in the edge with a storage upper bound $\theta_{[a,c]}$. When the amount of key stored in the edge $l_{[a,c]}\in \mathcal{L}$ is larger than $\theta_{[a,c]}$ at time slot $t$,  QKD in this edge becomes inactive, i.e., $S_{[a,c]}(t)=0$. Compared with the energy harvesting network in literature, we neglect the interaction of different paths due to the optical fiber communications applied in the network.

\begin{table}[htbp]
\centering
\caption{Notations. The subscript $[a,c]$ refers to a quantity defined between the adjacent nodes $a$ and $c$. Denote the data transmitted to $b$ as type-$b$ data.} \label{tab:notation}
\begin{tabular}{c|c}
\hline
Symbol & Interpretation \\
\hline
$N$ & Number of nodes in the network, $|\mathcal{N}|$ \\
$L$ & Number of edges in the network, $|\mathcal{L}|$ \\
$\mathcal{N}_a^{in(out)}$ & Set of nodes connected to (from) node $a$ \\
$Q_a^b(t)$ & Total type-$b$ data queue at node $a$ \\
$E_{[a,c]}(t)$ & Amount of key stored \\
$K_{[a,c]}$ & Amount of key generated per time slot\\
$S_{[a,c]}(t)$ & Working condition of QKD \\
$\theta_{[a,c]}$ & Saturation of key storage \\
$\mu_{[a,c]}(t)$ & Total data transmission \\
$\mu_{[a,c]}^b(t)$ & Type-$b$ data transmission \\
$P_{[a,c]}(t)$ & Key consumption \\
$R_a^b(t)$ & New type-$b$ data transmission request at node $a$ \\
\hline
\end{tabular}
\end{table}

\subsection{Utility optimization problem} \label{sub:utilityopt}
The data transmission capacity problem is a special case of the utility optimization problems. The utility is defined on each data flow, i.e., $U_a^b(R_a^b(t))$, which quantifies how much one can benefit from achieving a data rate $R_a^b(t)$. The concrete expression of the utility function can be defined according to practical applications. A common utility function is concave with the data transmission flow, for example, $U_a^b(R_a^b(t))=k\log_2(R_a^b(t))$, where the coefficient $k$ can be as simple as a constant. 
In particular, when $U_a^b(R_a^b(t))=R_a^b(t)$, the utility optimization problem reduces to the data transmission capacity problem.


The objective of the problem is to optimize the network utility obtained from serving data traffic. Specifically, we consider the following network utility,
\begin{equation}\label{eq:utility}
U_{tot}(\vec{r})=\sum_{a,b\in \mathcal{N} }U_a^b(r_a^b),
\end{equation}
where the average type-$b$ data transmission rate at node $a$ is given by
\begin{equation}\label{Eq:averagerate}
\begin{aligned}
r_a^b \equiv \liminf_{t\rightarrow\infty}\frac{1}{t}\sum_{\tau=0}^{t-1} {R_a^b(\tau)},
\end{aligned}
\end{equation}
and $\vec{r}$ is the matrix with elements of $r_a^b$. In order to evaluate the data transmission capacity for a quantum network, we need to optimize Eq.~\eqref{eq:utility} with certain dynamics and constraints.

\subsection{Dynamics and constraints in a quantum network}
Now, we model the dynamics, shown in Fig.~\ref{fig:dynamics}, and the constraints in the network model. First, we have the key storage dynamics,
\begin{equation}\label{eq:keydynamic}
E_{[a,c]}(t+1)=E_{[a,c]}(t)-P_{[a,c]}(t)+S_{[a,c]}(t)K_{[a,c]},
\end{equation}
where the increase of the key volume $S_{[a,c]}(t)K_{[a,c]}$ comes from QKD and the decrease $-P_{[a,c]}(t)$ is caused by key consumption for encryption. Note that in Eq.~\eqref{eq:keydynamic}, the key storage should be non-negative, $\forall\, t, l_{[a,c]}\in\mathcal{L}$,
\begin{eqnarray}\label{eq:no-underflow}
E_{[a,c]}(t) \geq P_{[a,c]}(t).
\end{eqnarray}
This key availability constraint, Eq.~\eqref{eq:no-underflow}, is a complicated constraint as it couples the key consumption actions across time, i.e., a current $P_{[a,c]}(t)$ decision can affect future actions.

Similarly, we have the data transmission dynamics,
\begin{equation}\label{eq:datadynamic}
Q_a^b(t+1)\leq Q_a^b(t) +R_a^b(t) +\sum_{c\in \mathcal{N}^{in}_a}\mu_{[c,a]}^b(t) -\sum_{c \in \mathcal{N}^{out}_n} \mu_{[a,c]}^b(t).
\end{equation}
The amount of type-$b$ data to be transmitted at node $a$ come from two sources: data flow from other nodes to node $a$, $\sum_{c\in \mathcal{N}^{in}_a}\mu_{[c,a]}^b(t)$; and new data admitted to $a$ for $b$, $R_a^b(t)$. Meanwhile, the queue will decrease if data is transmitted from $a$ to other adjacent nodes $\sum_{c \in \mathcal{N}^{out}_n} \mu_{[a,c]}^b(t)$. The inequality is due to the possibility that neighbor nodes may not have enough data to fulfill the allocated rate. In the following discussions, we just take it as an equality, as when the rate is over-allocated, one can just send some dummy data. Finally, we take account of the stability of the network. That is, the data queue backlog of the whole network needs to be convergent with time,
\begin{equation}\label{eq:stability}
\bar{Q} \equiv  \limsup_{t\rightarrow \infty}\frac{1}{t}\sum_{\tau = 0}^{t-1}\sum_{a,b} {Q_a^b(\tau)}< \infty.
\end{equation}
The stability condition makes sure that all packets admitted into the network are eventually delivered.

\begin{figure*}[hbt]
\centering
\resizebox{16cm}{!}{\includegraphics{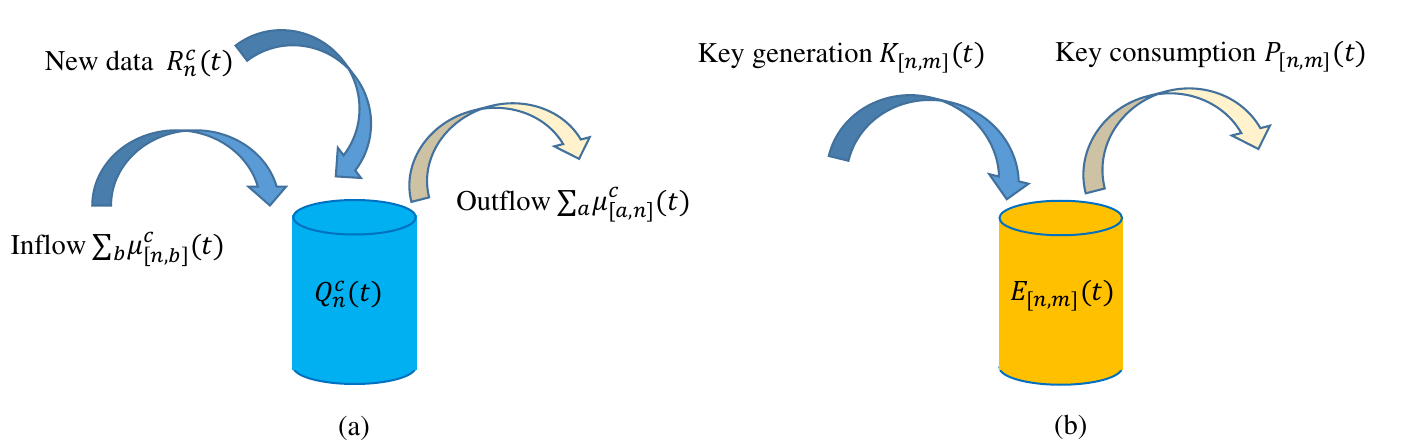}}
\caption{Dynamics in a quantum network. (a) Dynamics of the data queue, as formulated in Eq.~\eqref{eq:datadynamic}. (b) Dynamics of the key storage, as formulated in Eq.~\eqref{eq:keydynamic}.}
\label{fig:dynamics}
\end{figure*}


\subsection{Algorithm design}
To solve the utility optimization problem defined in Section \ref{sub:utilityopt}, we design an algorithm based on the Lyapunov optimization technique \cite{neelynowbook}, which has found wide applications in different network scenarios \cite{net-intelligence-ton,li-ton-15,rahulneely-storage}. Define the Lyapunov function,
\begin{equation}\label{eq:Lyapunov}
L(t) \equiv \frac{1}{2}\sum_{a,b\in \mathcal{N} }[Q_a^b(t)]^2+\frac{1}{2}\sum_{l_{[a,c]}\in \mathcal{L}}[E_{[a,c]}(t)-\theta_{[a,c]}]^2,
\end{equation}
where the storage saturation values $\theta_{[a,c]}$ should be chosen carefully in the algorithm as discussed later in this section. Define the following \emph{drift-plus-penalty} \cite{neelynowbook} for our algorithm design, so as to optimize utility while ensuring network stability,
\begin{equation}\label{eq:target}
\Delta_V(t) \equiv \Delta(t)-V\sum_{a,b\in \mathcal{N} }U(R_a^b(t)),
\end{equation}
where $V$ is a tunable positive constant and
\begin{equation}\label{eq:ldrift}
\Delta(t)=L(t+1)-L(t).
\end{equation}
The construction of the target function, Eq.~\eqref{eq:target}, is similar to the Lagrange multiplier method.

Then, we choose the control action to minimize the drift-plus-penalty given in Eq.~\eqref{eq:target}. Using the queueing dynamics in Eqs.~\eqref{eq:keydynamic} and \eqref{eq:datadynamic}, after some algebras, we decouple the key management and data transmission, so that we can optimize them separately. In the end, the target function Eq.~\eqref{eq:target} can be rewritten as
\begin{equation}\label{eq:targetre}
\begin{aligned}
\Delta_V(t)&\leq B + \sum_{l_{[a,c]}\in \mathcal{L}}(E_{[a,c]}(t)-\theta_{[a,c]})S_{[a,c]}(t)K_{[a,c]} \\
&\quad -\sum_{a,b\in \mathcal{N}}[VU_a^b(R_a^b(t))-Q_a^b(t)R_a^b(t)] \\
& \quad-\sum_{a,b\in \mathcal{N}} \sum_{c\in \mathcal{N}_a^{out}}\mu_{[a,c]}^b(t)[Q_a^b(t)-Q_c^b(t)] \\
& \quad-\sum_{l_{[a,c]}\in \mathcal{L}}(E_{[a,c]}(t)-\theta_{[a,c]})P_{[a,c]}(t).
\end{aligned}
\end{equation}
Here the constant $B$ is given by
\begin{equation}\label{eq:constantB}
\begin{aligned}
B\equiv N^2(\frac{3}{2}d_{max}^2\mu_{max}^2+R_{max}^2)+\frac{L}{2}(P_{max}+K_{max})^2,
\end{aligned}
\end{equation}
where the subscript $max$ means the maximal possible values in the strategies and $d_{max}=\max_a (|\mathcal{N}_a^{in}|,|\mathcal{N}_a^{out}|)$. The detailed derivations of Eq.~\eqref{eq:targetre} is presented in \ref{app:derivation}.

Before we give the utility optimization algorithm, we need to introduce the following network technical terms. The saturation of the key storage, $\theta_{[a,c]}$, is defined as
\begin{equation} \label{eq:deftheta}
\begin{aligned}
\theta_{[a,c]}\equiv \delta \beta V +P_{max},
\end{aligned}
\end{equation}
where $\delta$ is a positive constant satisfying $\mu_{[a,c]}(P_{[a,c]}(t))\leq \delta P_{[a,c]}(t)$ and $\beta$ is the largest first derivative of the utility functions, $\beta= \max_{a,b}\beta_{a,b}=\max_{n,c}(U_a^b)^\prime(0)$. Here, we only consider $(U_a^b)^\prime(0)$ since the utility function is concave. The operational meaning of $\theta_{[a,c]}$ is to let key storage be saturated to a positive constant $\theta_{[a,c]}$ rather than zero since we often need a positive key storage to handle urgent data transmission tasks.

Then, we define the weight of the type-$b$ data over the link $l_{[a,c]}$ as
\begin{equation}\label{eq:linkweight}
  W_{[a,c]}^b(t) = Q_a^b(t)-Q_c^b(t)-\gamma.
\end{equation}
The link weight is given by $W_{[a,c]}(t)=\max_b W_{[a,c]}^b(t)$. Here, $\gamma$ is defined as
\begin{equation}
\begin{aligned}
\gamma \equiv R_{max}+d_{max}\mu_{max},
\end{aligned}
\end{equation}
which means the maximum possible increase of the data queue in a node in a single time slot, including the maximum endogenous increase $d_{max}\mu_{max}$ and exogenous increase $R_{max}$. We consider a data transmission task by some link $l_{[a,c]}$ to be important only when the data queue difference between two nodes, $Q_a^b(t)-Q_c^b(t)$, is large enough (larger than $\gamma$).

The main idea of the algorithm is to optimize the data transmission, $R_a^b(t)$ ($\forall a,b\in \mathcal{N}$), and key management, $P_{[a,c]}(t)$ ($\forall l_{[a,c]}\in \mathcal{L}$), by minimizing the target function, Eq.~\eqref{eq:targetre}, subject to Eqs.~\eqref{eq:no-underflow} and \eqref{eq:datadynamic}. In Eq.~\eqref{eq:targetre} we can see that the optimization of $R_a^b(t)$ and $P_{[a,c]}(t)$ can be done separately.
Note that the network stability constraint Eq.~\eqref{eq:stability} is automatically satisfied under the Lyapunov drift approaches Eq.~\eqref{eq:ldrift}. The total utility Eq.~\eqref{eq:utility} is not optimized directly, but the optimization result can be arbitrarily close to maximum utility of Eq.~\eqref{eq:utility}, which will be discussed in details in Sec.~\ref{sub:performance}.

Now, we present the main optimization algorithm given in Table~\ref{tab:algorithm}, inspired by the energy-limited scheduling algorithm \cite{Longbo2013Utility}.

\begin{table*}
\caption{Utility optimization algorithm.}
\label{tab:algorithm}
\begin{framed}
\begin{enumerate}
\item
\emph{Input of the algorithm}. Initialize $\theta_{[a,c]}$. At every time slot t, observe $Q_a^b(t)$ and $E_{[a,c]}(t)$.
\item
\emph{Key generation}. If $E_{[a,c]}(t)-\theta_{[a,c]}<0$, perform key generation, i.e., let $S_{[a,c]}(t)=1$; otherwise, let $S_{[a,c]}(t)=0$. Note that this decision minimizes the second term on the right-hand side of Eq.~\eqref{eq:targetre}.
\item
\emph{Data transmission}. Make a local optimization on the third term of the right-hand side of Eq.~\eqref{eq:targetre},
\begin{equation}\label{eq:optdata}
\max_{R_a^b(t)}\quad VU_a^b(R_a^b(t))-Q_a^b(t)R_a^b(t),
\end{equation}
with the constraint of $0<R_a^b(t)<R_{max}$.
\item
\emph{Key management}. Optimize the key consumption over all edges, $\vec{P}(t)$, by solving the following maximization
\begin{equation}\label{eq:optpower}
\begin{aligned}
\max_{\vec{P}(t)} G(\vec{P}(t))&= \sum_{n\in \mathcal{N}} \sum_{c\in \mathcal{N}_a^{out}}\mu_{[a,c]}(t)W_{[a,c]}(t) \\&+
\sum_{l_{[a,c]}\in \mathcal{L}}(E_{[a,c]}(t)-\theta_{[a,c]})P_{[a,c]}(t) \\
& = \sum_{l_{[a,c]}\in \mathcal{L}}\left\{\mu_{[a,c]}(t)W_{[a,c]}(t)+ (E_{[a,c]}(t)-\theta_{[a,c]})P_{[a,c]}(t)\right\},
\end{aligned}
\end{equation}
subject to the key availability constraint Eq.~\eqref{eq:no-underflow}. 
\item
\emph{Routing and scheduling}. Find $b^* \in argmax_b W_{[a,c]}^b(t)$. If $W_{[a,c]}^{b^*}(t)>0$, set $\mu_{[a,c]}^{b^*}(t)=\mu_{[a,c]}(t)$, i.e., allocate the full rate over the link $l_{[a,c]}$ to any commodity achieving the maximum positive weight.
\item
\emph{Queue update}. Update $Q_a^b(t)$ and $E_{[a,c]}(t)$ according to their dynamics Eqs.~\eqref{eq:keydynamic} and \eqref{eq:datadynamic}, respectively.
\end{enumerate}
\end{framed}
\end{table*}

\subsection{Analysis of the algorithm and its performance} \label{sub:performance}
Here, we explain how the algorithm works and analyze its performance. We make some remarks on the details of the algorithm. First, the key availability constraint given in Eq.~\eqref{eq:no-underflow} is actually redundant, i.e., we can directly optimize Eq.~\eqref{eq:optpower} without any constraint and obtain the same key management action.
To prove this, we have the following lemma and leave the proof in \ref{app:prooflemma1}

\begin{lemma}\label{lemma:bound}
The data queue and key storage have the following deterministic bounds, $\forall a, b, t, l_{[a,c]}\in \mathcal{L}$,
\begin{equation}
\begin{aligned}
0&\leq Q_a^b(t) \leq \beta V +R_{max}, \\
0&\leq E_{[a,c]}(t) \leq \theta_{[a,c]}+K_{max}.
\end{aligned}
\end{equation}
\end{lemma}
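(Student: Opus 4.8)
The plan is to prove both chains of inequalities by induction on the time slot $t$, taking the natural initialization $Q_a^b(0)=0$ and $0\le E_{[a,c]}(0)\le\theta_{[a,c]}+K_{max}$, and using the explicit decision rules of the algorithm in Table~\ref{tab:algorithm}: the key-generation switch, the concave-admission rule \eqref{eq:optdata}, and the thresholded link weight \eqref{eq:linkweight}. Each bound reduces to checking that if it holds at slot $t$ it persists at slot $t+1$.

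I would dispatch the key-storage bounds first, since they are the easier half. The lower bound $E_{[a,c]}(t)\ge 0$ is immediate from the no-underflow constraint \eqref{eq:no-underflow}: in the dynamics \eqref{eq:keydynamic} we have $P_{[a,c]}(t)\le E_{[a,c]}(t)$ and $S_{[a,c]}(t)K_{[a,c]}\ge 0$, so $E_{[a,c]}(t+1)\ge 0$. For the upper bound I would split on the generation rule. If $E_{[a,c]}(t)\ge\theta_{[a,c]}$ then $S_{[a,c]}(t)=0$, the storage cannot grow, and the bound is inherited; if $E_{[a,c]}(t)<\theta_{[a,c]}$ then at most $K_{[a,c]}\le K_{max}$ is added, giving $E_{[a,c]}(t+1)<\theta_{[a,c]}+K_{max}$. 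This closes that induction.

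For the queue bounds, non-negativity $Q_a^b(t)\ge 0$ is by construction, since a node never dispatches more type-$b$ data than it holds (the departure term in \eqref{eq:datadynamic} is capped by the available backlog). For the upper bound I would drop the non-positive departure term and bound $Q_a^b(t+1)\le Q_a^b(t)+R_a^b(t)+\sum_{c\in\mathcal{N}_a^{in}}\mu_{[c,a]}^b(t)$, then split on whether $a$ receives endogenous type-$b$ traffic. When there is none, the admission rule \eqref{eq:optdata} is the lever: since $U_a^b$ is concave with $(U_a^b)'(0)\le\beta$, the objective $VU_a^b(R)-Q_a^b(t)R$ has non-positive derivative once $Q_a^b(t)>\beta V$, forcing $R_a^b(t)=0$; and when $Q_a^b(t)\le\beta V$ the increase is at most $R_{max}$. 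Either way $Q_a^b(t+1)\le\beta V+R_{max}$.

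The main obstacle, and the step where the threshold $\gamma$ must do real work, is the case in which $a$ \emph{does} receive endogenous traffic. Here I would use that $\mu_{[c,a]}^b(t)>0$ forces the weight $W_{[c,a]}^b(t)>0$, i.e. $Q_c^b(t)-Q_a^b(t)>\gamma=R_{max}+d_{max}\mu_{max}$. Combined with the induction hypothesis $Q_c^b(t)\le\beta V+R_{max}$, this pins the receiving queue down to $Q_a^b(t)<\beta V-d_{max}\mu_{max}$: a queue that is being fed by backpressure must currently be small. Bounding the total per-slot inflow by $R_a^b(t)+\sum_c\mu_{[c,a]}^b(t)\le R_{max}+d_{max}\mu_{max}$ then yields $Q_a^b(t+1)<(\beta V-d_{max}\mu_{max})+R_{max}+d_{max}\mu_{max}=\beta V+R_{max}$, closing the induction. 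The crux to articulate is that $\gamma$ is calibrated to be exactly the maximum possible single-slot queue increase, so that whenever backpressure admits fresh endogenous arrivals the recipient has precisely enough headroom to absorb both those arrivals and a full admission without breaching the stated bound.
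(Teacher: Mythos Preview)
Your proposal is correct and follows essentially the same induction-on-$t$ argument as the paper's proof (Appendix~\ref{app:prooflemma1}): the same case split on whether node $a$ receives endogenous traffic, the same use of the threshold $\gamma$ in \eqref{eq:linkweight} to guarantee $Q_a^b(t)\le\beta V+R_{max}-\gamma$ when it does, and the same key-generation switch argument for the storage bound. If anything, your derivative analysis of the admission rule \eqref{eq:optdata} is slightly more explicit than the paper's one-line ``the existence of a valid optimization result requires $Q_a^b(t)\le\beta V$.''
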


Suppose the optimized key consumption vector obtained by Eq.~\eqref{eq:optpower} is $\vec{P^*}(t)$. Then we consider a new key consumption vector $\vec{P_0}(t)$ by setting $P^*_{[a,c]}(t)$ in $\vec{P^*}(t)$ to be $0$, i.e., the only difference between $\vec{P^*}(t)$ and $\vec{P_0}(t)$ is the key consumption in the link $l_{[a,c]}$. If the constraint Eq.~\eqref{eq:no-underflow} is violated, i.e., $E_{[a,c]}<P_{[a,c]}$, then
\begin{equation}
\begin{aligned}
&G(\vec{P^*}(t))-G(\vec{P_0}(t)) \\
&=\mu_{[a,c]}(P^*_{[a,c]}(t))W_{[a,c]}(t)+\left(E_{[a,c]}(t)-\theta_{[a,c]}\right)P^*_{[a,c]}(t) \\
& \leq \delta P^*_{[a,c]}(t)(\beta V -d_{max}\mu_{max})-\delta\beta V P^*_{[a,c]}(t) \\
& < 0,
\end{aligned}
\end{equation}
which leads to a contradiction that $\vec{P^*}(t)$ is not the optimized strategy. The first inequality is obtained by Lemma~\ref{lemma:bound} and $\mu_{[a,c]}(P^*_{[a,c]}(t))\leq \delta P^*_{[a,c]}(t)$ is due to the definition of $\delta$ of Eq.~\eqref{eq:deftheta}. Especially, for one-time-pad encryption, we have $\mu_{[a,c]}(P^*_{[a,c]}(t))= P^*_{[a,c]}(t)$ and we can take $\delta\geq 1$.

Second, in steps \emph{key management} and \emph{routing and scheduling}, we make an optimization on the destination $b$, i.e., we only consider the destination $b^*$ with the maximum link weight, because
\begin{equation}\label{eq:linkwt}
\begin{aligned}
\sum_b \mu_{[a,c]}^b(t)W_{[a,c]}^b(t) &\leq \sum_b \mu_{[a,c]}^b(t)W_{[a,c]}(t) \\
&=\mu_{[a,c]}(t)W_{[a,c]}(t)
\end{aligned}
\end{equation}
Therefore, it is optimal to allocate the full rate over the link $l_{[a,c]}$ to any commodity achieving the maximum positive weight. If there are multiple destinations $b^*$ achieving the maximum link weight, we can randomly choose one of them to allocate the full rate.

Third, one can see that the optimized target function in the algorithm is different from the original utility function given in Eq.~\eqref{eq:utility}. We want to show that the optimization result of the algorithm can be arbitrary close to the optimal utility $U_{tot}$, i.e., the performance of the algorithm is given by the following theorem and leave its proof in \ref{app:prooftheorem3}.

\begin{theorem}\label{theorem:performance}
The utility optimization result of the algorithm can be arbitrarily close to the optimal utility $U_{tot}$,
\begin{equation} \label{optutility}
\begin{aligned}
\liminf_{\tau\rightarrow \infty}U_{tot}(\vec{r}(\tau)) &=\liminf_{\tau\rightarrow \infty}\sum_{n,c} U_a^b(r_a^b(\tau)) \\
&\geq U_{tot}(\vec{r^*})-\frac{\tilde{B}}{V},
\end{aligned}
\end{equation}
where $r_a^b(\tau)=\frac{1}{\tau}\sum_{t=0}^{\tau-1}R_a^b(t)$ is the average data flow, $\tilde{B}=B+N^2\gamma d_{max}\mu_{max}$ is a constant, and $\vec{r^*}$ is an optimal solution for Eq.~\eqref{eq:utility}.
\end{theorem}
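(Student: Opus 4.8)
The plan is to run the standard drift-plus-penalty argument and to compare the control actions chosen by the algorithm against those of an optimal stationary policy. First I would start from the per-slot bound \eqref{eq:targetre}: since the algorithm selects $R_a^b(t)$, $P_{[a,c]}(t)$, and $S_{[a,c]}(t)$ so as to minimize its right-hand side (equivalently, to maximize the benefit terms subtracted there), the attained $\Delta_V(t)$ is at most the value that bound takes under \emph{any} other per-slot-feasible action, evaluated at the \emph{same} queue state $Q_a^b(t)$ and key storage $E_{[a,c]}(t)$. The key step is then to exhibit a convenient comparison policy: an optimal stationary (time-invariant, possibly randomized) policy $\Pi^*$ whose admitted rates equal $\vec{r^*}$, whose per-link flows $\mu^{*b}_{[a,c]}$ obey flow conservation at every node for every commodity $b$, and whose key usage obeys the balance relation $S^*_{[a,c]}K_{[a,c]}=P^*_{[a,c]}$ on average. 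Such a policy exists because $\vec{r^*}$ is by definition a feasible (indeed optimal) point of \eqref{eq:utility} and is therefore supported by some stationary set of edge flows and key allocations.

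Next I would substitute $\Pi^*$ into the right-hand side of \eqref{eq:targetre} and watch the queue- and storage-weighted terms collapse. Rearranging the transmission term $-\sum_{a,b}\sum_{c}\mu^{*b}_{[a,c]}(t)[Q_a^b(t)-Q_c^b(t)]$ by edges into a sum over nodes and invoking flow conservation turns it into $-\sum_{a,b}Q_a^b(t)\,r_a^b$, which cancels exactly against the $+\sum_{a,b}Q_a^b(t)R_a^b(t)$ contribution of the admission term (with $R_a^b(t)=r_a^b$ under $\Pi^*$); what survives is precisely $-V\sum_{a,b}U_a^b(r_a^b)=-VU_{tot}(\vec{r^*})$. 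Likewise, key balance makes $\sum_{l_{[a,c]}}(E_{[a,c]}(t)-\theta_{[a,c]})(S^*_{[a,c]}K_{[a,c]}-P^*_{[a,c]})$ vanish. The one loose end is that the algorithm optimizes the \emph{thresholded} weights $W_{[a,c]}^b(t)$ of \eqref{eq:linkweight} rather than the raw differences $Q_a^b(t)-Q_c^b(t)$; passing between the two costs at most $\gamma\,\mu_{[a,c]}(t)$ per edge, and summing these slacks over all edges is exactly what upgrades the constant $B$ of \eqref{eq:constantB} to $\tilde{B}=B+N^2\gamma d_{max}\mu_{max}$. Collecting everything yields the clean per-slot inequality $\Delta_V(t)\le \tilde{B}-VU_{tot}(\vec{r^*})$.

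Finally I would unwind the definition $\Delta_V(t)=L(t+1)-L(t)-V\sum_{a,b}U_a^b(R_a^b(t))$, sum the resulting inequality over $t=0,\dots,\tau-1$ so that the Lyapunov terms telescope, drop $L(\tau)\ge 0$, and divide by $V\tau$ to get $\frac{1}{\tau}\sum_{t=0}^{\tau-1}\sum_{a,b}U_a^b(R_a^b(t))\ge U_{tot}(\vec{r^*})-\tilde{B}/V-L(0)/(V\tau)$. Concavity of each $U_a^b$ then lets me apply Jensen's inequality, $U_a^b(r_a^b(\tau))\ge \frac{1}{\tau}\sum_{t}U_a^b(R_a^b(t))$ with $r_a^b(\tau)=\frac1\tau\sum_{t}R_a^b(t)$, and taking $\liminf_{\tau\to\infty}$ kills the $L(0)/(V\tau)$ remainder (finite by Lemma~\ref{lemma:bound}), giving exactly \eqref{optutility}. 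I expect the main obstacle to be the second step: establishing the existence of the stationary comparison policy $\Pi^*$ achieving $\vec{r^*}$ with genuinely feasible supporting flows and key allocations, and verifying that the weighted terms cancel \emph{exactly} after accounting for the thresholding slack. This is where feasibility of the optimization, the redundancy of the key-availability constraint \eqref{eq:no-underflow}, and the bookkeeping that produces $\tilde{B}$ all have to be reconciled.
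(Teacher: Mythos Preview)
Your proposal is correct and follows essentially the same drift-plus-penalty route as the paper: bound $\Delta_V(t)$ via Eq.~\eqref{eq:targetre}, account for the $\gamma$-thresholding slack to replace $B$ by $\tilde{B}$, compare against an optimal stationary policy to collapse the right-hand side to $\tilde{B}-VU_{tot}(\vec{r^*})$, then telescope, divide by $V\tau$, apply Jensen, and take $\liminf$. The only difference is presentational: the paper handles the comparison step by citing external results (Theorem~1 of \cite{Longbo2013Utility} and Claim~1 of \cite{neely2006energy}) for the inequality $-\tilde{D}^{\tilde{\mathcal{S}}}(t)\ge VU_{tot}(\vec{r^*})$, whereas you unpack that citation by explicitly constructing the stationary comparison policy $\Pi^*$ and showing the queue- and storage-weighted terms cancel via flow conservation and key balance---a more self-contained treatment of the same idea.
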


Finally, compared to the original algorithm given in \cite{Longbo2013Utility}, we can optimize the key management in Eq.~\eqref{eq:optpower} locally for each edge rather than each node. This is a particularly useful feature for practical implementation.

\subsection{Simulation}
To test our algorithm, we make a simulation of the toy network model given in Fig.~\ref{fig:strategy1}. Here we consider a simple task where the data are transmitted from Alice to Bob, i.e., $R_a^b(t)=0$ for all nodes $a$ except Alice's node. From the network structure in Fig.~\ref{fig:strategy1} we can see that $d_{max}=2$. We set the values of other parameters as $\beta=1$, $\delta=2$, $P_{max}=\mu_{max}=2$, $R_{max}=3$, $K_{[a,c]}=0.1, \forall l_{[a,c]}$, $\gamma = d_{max}\mu_{max}+R_{max}=7$ and $\theta_{[a,c]} = \delta \beta V +P_{max} =2V+2$. These values are just taken for simplicity and we do not consider their units. The utility function is taken as $U(r) = \ln(1+r)$. The optimization of such a utility function is actually also a maximization of data transmission. We simulate the utility of the whole network versus $V\in \{5,10,15,20,25,30,35,40,45\}$ to see convergence behavior when $V$ goes larger.

The simulation of the total utility is given in Fig.~\ref{fig:utility}. It turns out that the utility converges quickly to the optimal value of 0.1815, which shows our algorithm is quite efficient. We also show the evolution of the total data queue and key storage in Fig.~\ref{fig:QandEvsT} with $V=40$. We can observe that the network become stable after $t=10^3$. and the data queue grows slowly during $t\in[10^2, 10^3]$, which is because the amount of data queue exceeds the key storage in the previous time period. After the stability of the network, the gap between the key storage and data queue is determined by the parameter settings, which can be optimized and is left for future works. The evolution of the utility is shown in Fig.~\ref{fig:UvsT}. The drop at the beginning comes from the initialization of the network. The data transimission at the beginning is high since the data and key storage is initialized to be zero, which leads to a high utility. The convergence speed is similar to the original algorithm \cite{Longbo2013Utility}.

In our simulation of the toy model, we choose typical values for the parameters $\beta$, $\delta$, $P_{max}$, $\mu_{max}$, $R_{max}$, and $K_{[a,c]}$ and assume that the key rates in different links are the same, $K_{[a,c]}=0.1, \forall l_{[a,c]}$. While in a practical field test we can substitute some real values into the simulation, such as the QKD key rates in each link and the classical channel capacity. We can obtain useful results with the real values, for example, the convergence time of data queue and key storage, the saturation of the data queue and key storage, and the data transmission capacity of the network.

We make comparisons with some other routing protocols in the literature. Some protocols are proposed for different targets in a QKD network. For example, in ref.~\cite{amer2020efficient}, the authors apply a multi-path routing scheme to maximize the key rate between two remote end-users in a network with both quantum repeaters and trusted nodes. While in ref.~\cite{tanizawa2016routing}, the authors consider a routing protocol based on current key storage. It applies a modified Open Shortest Path Fast routing algorithm to minimize the path length between end-users and reduce the key consumption in the meantime. There are also some routing protocols in energy harvesting networks, aiming at optimizing the utility function of the data flow. In \cite{chen2011finite}, the utility optimization has been formulated as a standard convex optimization problem without Lyapunov optimization techniques, however, it requires future knowledge on the energy harvesting that is hardly available. Another protocol based on a convex optimization problem without Lyapunov optimization techniques is proposed in ref.~\cite{chen2013simple}, which gives rise to an asymptotic optimal solution. By simulating the simplest network with one source and one destination, it shows a better performance on utility compared with the optimizations based on Lyapunov optimization techniques.

\begin{figure}[hbt]
\centering
\resizebox{8cm}{!}{\includegraphics{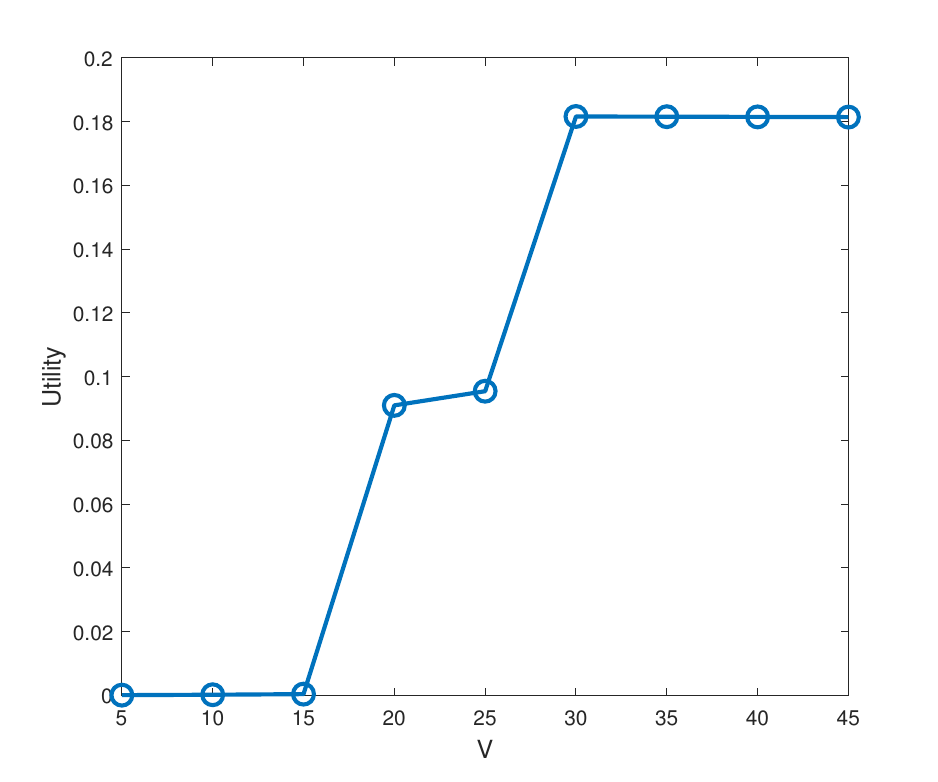}}
\caption{Simulation result of a toy network model. The optimized utility (also maximized data transmission) converges quickly to the optimal value.}
\label{fig:utility}
\end{figure}

\begin{figure}[hbt]
\centering
\resizebox{8cm}{!}{\includegraphics{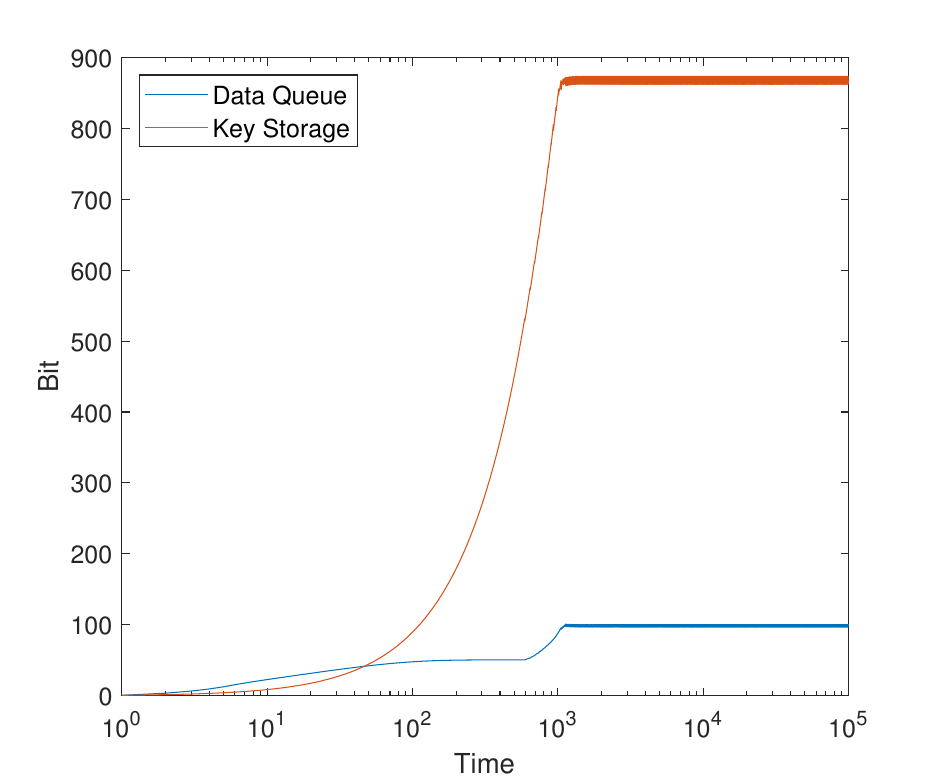}}
\caption{Evolution of the total data queue and key storage of the toy network with $V=40$. We can see that the network become stable after $t=10^3$. The width of both curves shows the data fluctuations.}
\label{fig:QandEvsT}
\end{figure}

\begin{figure}[hbt]
\centering
\resizebox{8cm}{!}{\includegraphics{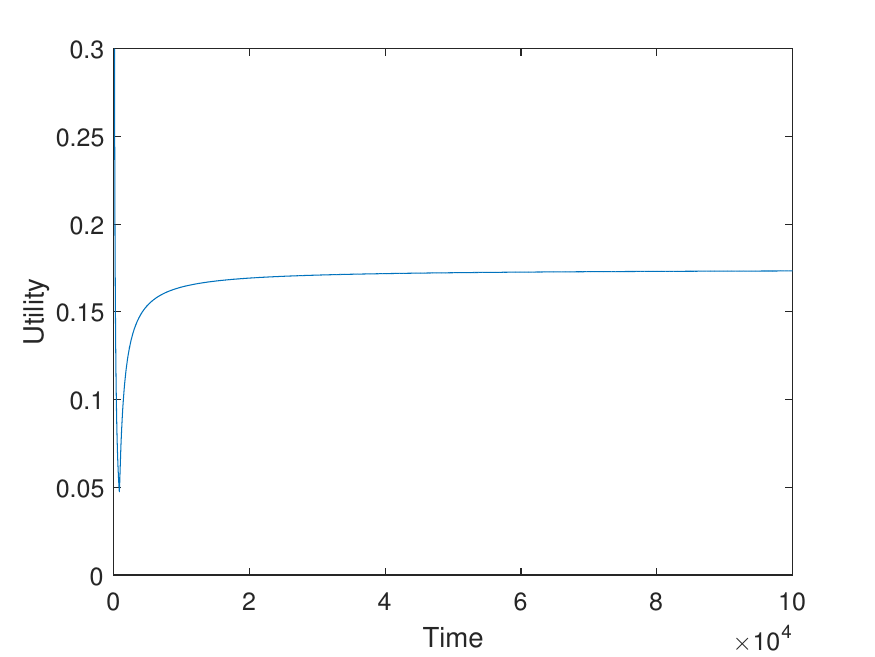}}
\caption{Evolution of the utility of the toy network with $V=40$.}
\label{fig:UvsT}
\end{figure}

\section{Discussion and conclusion}\label{sec:conclusion}

In this work, we propose solutions to two typical and crucial issues in quantum networks, namely security and key management. We tackle the security issue with graph theory and design a communication scheme of the highest security level, where each intermediate node broadcasts the XOR result of all its keys. To optimize the utility of the data. These two problems are closely connected in some special situations. Suppose the key is free and data cannot be stored in each node. Then our optimization problem will reduce to the maximum flow problem in a directed graph. If we further assume the capacities of the classical links are the same, this maximum flow will be proportional to the security level according to the max-flow min-cut theorem.

In this paper, we consider two networks, a classical network for data transmission and a QKD network for key generation. The latter can be naturally extended to an entanglement distribution network in the future, where the key distribution and XOR operation correspond to the Einstein–Podolsky–Rosen (EPR) pair distribution and entanglement swapping. It is interesting to apply the techniques used in this work to an entanglement distribution network.

For future works, one can substitute the data communication requests and key rate of an actual quantum network (such as the Hefei 46-node network) to our simulations and make a field test of our results. One can also consider more complex topological structures and other practical issues such as latency and scalability.

Finally, a trusted node does not need to perform full QKD with users, i.e., the privacy amplification process can be omitted first and raw keys can be directly exchanged \cite{Stacey2015Relay}. We call such a node an honest but curious node. In this case, the intermediate nodes lie between trusted and untrusted. The security assessment needs more complicated analysis.

\appendix

\section{Derivations of Eq.~\eqref{eq:targetre}}\label{app:derivation}
By the definition of $\Delta(t)$ in Eq.~\eqref{eq:ldrift}, we divide $\Delta(t)$ into two parts. The first part comes from the data queue term,
\begin{equation}\label{appeq:diff}
\begin{aligned}
&\frac{1}{2}\sum_{a,b\in \mathcal{N}}[Q_a^b(t+1)]^2-\frac{1}{2}[Q_a^b(t)]^2 \\
&= \sum_{a,b\in \mathcal{N}}Q_a^b(t)\left(-\sum_{c\in \mathcal{N}_a^{out}}\mu^b_{[a,c]}(t)+\sum_{c\in \mathcal{N}_a^{in}} \mu^b_{[c,a]}(t)+ R_a^b(t)\right) \\
& +\frac{1}{2}\left(-\sum_{c\in \mathcal{N}_a^{out}}\mu^b_{[a,c]}(t)+\sum_{c\in \mathcal{N}_a^{in}} \mu^b_{[c,a]}(t)+ R_a^b(t)\right)^2.
\end{aligned}
\end{equation}
For the first term in the rhs. of Eq.~\eqref{appeq:diff}, we want to show that
\begin{equation}\label{appeq:queue}
\begin{aligned}
&\sum_{a,b}Q_a^b(t)\left(-\sum_{c\in \mathcal{N}_a^{out}}\mu^b_{[a,c]}(t)+\sum_{c\in \mathcal{N}_a^{in}} \mu^b_{[c,a]}(t)\right) \\
= &- \sum_{a,b\in \mathcal{N}} \sum_{c\in \mathcal{N}_a^{out}} \mu_{[a,c]}^b(t)[Q_a^b(t)-Q_c^b(t)].
\end{aligned}
\end{equation}
Consider an arbitrary term, $Q_a^b(t)\left(-\mu^b_{[a,c]}(t)+\mu^{b}_{[a^\prime,a]}(t)\right)$, there will always be another term in the summation $Q_c^b(t)\left(-\mu^b_{[c,c^\prime]}(t)+ \mu^b_{[a,c]}(t)\right)$. We can regroup these terms and obtain $-\mu^b_{[a,c]}(t)[Q_a^b(t)-Q_c^b(t)]$. Similarly, we can do this for all other terms in the summation and get the rhs. of Eq.~\eqref{appeq:queue}.

For the second term in the rhs. of Eq.~\eqref{appeq:diff}, we have
\begin{equation}
\begin{aligned}
&\left(-\sum_{c\in \mathcal{N}_a^{out}}\mu^b_{[a,c]}(t)+\sum_{c\in \mathcal{N}_a^{in}} \mu^b_{[c,a]}(t)+ R_a^b(t)\right)^2 \\
&\leq \left(\sum_{c\in \mathcal{N}_a^{out}}\mu^b_{[a,c]}(t)\right)^2+\left(\sum_{c\in \mathcal{N}_a^{in}} \mu^b_{[c,a]}(t)+ R_a^b(t)\right)^2 \\
&\leq d^2_{max}\mu^2_{max}+ (d_{max}\mu_{max}+R_{max})^2 \\
&\leq d^2_{max}\mu^2_{max}+ \frac{1}{2}(d^2_{max}\mu^2_{max}+R^2_{max}) \\
&=3d^2_{max}\mu^2_{max}+2R^2_{max}.
\end{aligned}
\end{equation}

Similar calculations can be done for the second part of $\Delta(t)$ which comes from the key storage term. Finally we can get Eq.~\eqref{eq:targetre} by some algebras.

\section{Proof of Lemma~\ref{lemma:bound}}\label{app:prooflemma1}
We prove this lemma with mathematical induction. First we can easily see that the bound holds for $t=0$, since $Q_a^b(0)=0$ and $E_{[a,c]}(0)=0$.

Then we prove that if $0\leq Q_a^b(t) \leq \beta V +R_{max}$, then $0\leq Q_a^b(t+1) \leq \beta V +R_{max}$. According to the dynamics of data queue Eq.~\eqref{eq:datadynamic}, we can see that the increase of the data queue comes from two aspects: endogenous data $\sum_{c\in \mathcal{N}_a^{in}}\mu_{[c,a]}^b(t)$ and exogenous data $R_a^b(t)$. We consider the following two exclusive cases: first, if there are endogenous data, then they must come from at least one other node, say $c$. From Eq.~\eqref{eq:linkweight}, if there is a data flow from $c$ to $a$ at time $t$, their data queues at time $t$ must satisfy,
\begin{equation}
Q_a^b(t)\leq Q_c^b(t)-\gamma\leq \beta V+R_{max}-\gamma.
\end{equation}
From time slot $t$ to $t+1$, the maximum possible data queue increase of one node is $\gamma$. Then we have $Q_a^b(t+1)\leq  \beta V+R_{max}$; second, there are no endogenous data, which means there are only exogenous data or there are no data queue increase at all. From Eq.~\eqref{eq:optdata}, the existence of a valid optimization result $R_a^b(t)$ requires $Q_a^b(t)\leq\beta_{a,b} V\leq\beta V$. From time slot $t$ to $t+1$, the maximum possible exogenous data queue increase is $R_{max}$. Then we also have $Q_a^b(t+1)\leq  \beta V+R_{max}$. If there are no data queue increase from $t$ to $t+1$, it is straightforward that $Q_a^b(t+1)\leq Q_a^b(t)\leq \beta V+R_{max}$.

Finally we prove if $0\leq E_{[a,c]}(t)\leq \theta_{[a,c]}+K_{max}$, then $0\leq E_{[a,c]}(t+1)\leq \theta_{[a,c]}+K_{max}$. We also consider the following two exclusive cases: according to the second step of the algorithm, if $E_{[a,c]}(t)<\theta_{[a,c]}$, there will be key generation with a maximum key of $K_{max}$, then $E_{[a,c]}(t+1)<\theta_{[a,c]}+K_{max}$; if $E_{[a,c]}(t)\geq \theta_{[a,c]}$, there will be no key generation and $E_{[a,c]}(t+1)\leq E_{[a,c]}(t)\leq \theta_{[a,c]}+K_{max}$.
\section{Proof of Theorem~\ref{theorem:performance}}\label{app:prooftheorem3}

In the algorithm we minimize the following function at time $t$
\begin{equation}\label{eq:policy1}
\begin{aligned}
D(t)&= \sum_{l_{[a,c]}\in \mathcal{L}}(E_{[a,c]}(t)-\theta_{[a,c]})S_{[a,c]}(t)K_{[a,c]} \\
& -\sum_{a,b\in \mathcal{N}}[VU_a^b(R_a^b(t))-Q_a^b(t)R_a^b(t)] \\
& -\sum_{a,b\in \mathcal{N}} \sum_{c\in \mathcal{N}_a^{out}}\mu_{[a,c]}^b(t)[Q_a^b(t)-Q_c^b(t)-\gamma] \\
& -\sum_{l_{[a,c]}\in \mathcal{L}}(E_{[a,c]}(t)-\theta_{[a,c]})P_{[a,c]}(t),
\end{aligned}
\end{equation}
and get an optimized set of strategies $\mathcal{S}=\{R_a^b(t),P_{[a,c]}(t)\}$. Now we consider another function
\begin{equation}\label{eq:policy2}
\begin{aligned}
\tilde{D}(t)&= \sum_{l_{[a,c]}\in \mathcal{L}}(E_{[a,c]}(t)-\theta_{[a,c]})S_{[a,c]}(t)K_{[a,c]} \\
& -\sum_{a,b\in \mathcal{N}}[VU_a^b(R_a^b(t))-Q_a^b(t)R_a^b(t)] \\
& -\sum_{a,b\in \mathcal{N}} \sum_{c\in \mathcal{N}_a^{out}}\mu_{[a,c]}^b(t)[Q_a^b(t)-Q_b^c(t)] \\
& -\sum_{l_{[a,c]}\in \mathcal{L}}(E_{[a,c]}(t)-\theta_{[a,c]})P_{[a,c]}(t) \\
& = D(t) - \sum_{a,b\in \mathcal{N}} \sum_{c\in \mathcal{N}_a^{out}}\mu_{[a,c]}^b(t) \gamma.
\end{aligned}
\end{equation}
We can see that the only difference between Eq.~\eqref{eq:policy1} and Eq.~\eqref{eq:policy2} is that in Eq.~\eqref{eq:policy2} $\gamma$ is not introduced. Suppose the optimized strategy to minimize Eq.~\eqref{eq:policy2} is $\tilde{\mathcal{S}}=\{\tilde{R}_a^b(t),\tilde{P}_{[a,c]}(t)\}$. We have the following relation,
\begin{equation}\label{eq:dfunctionrelation}
\begin{aligned}
D^\mathcal{S}(t)&\leq D^{\tilde{\mathcal{S}}}(t) \\
\tilde{D}^{\tilde{\mathcal{S}}} (t) &\leq \tilde{D}^\mathcal{S}(t).
\end{aligned}
\end{equation}
From the first inequality of Eq.~\eqref{eq:dfunctionrelation} we quickly get
\begin{equation}\label{eq:policyrelation}
\tilde{D}^\mathcal{S}(t)\leq \tilde{D}^{\tilde{\mathcal{S}}}(t)+N^2 \gamma d_{max} \mu_{max}.
\end{equation}
Then we substitute Eq.~\eqref{eq:policyrelation} to Eq.~\eqref{eq:targetre},
\begin{equation}
\Delta(t)-V\sum_{a,b\in \mathcal{N}}U_a^b(R_a^b(t))\leq B + \tilde{D}^\mathcal{S}(t)\leq \tilde{B}+ \tilde{D}^{\tilde{\mathcal{S}}} (t),
\end{equation}
where $\tilde{B}=B+N^2 \gamma d_{max} \mu_{max}$. Since the strategy $\tilde{\mathcal{S}}=\{\tilde{R}_a^b(t),\tilde{P}_{[a,c]}(t)\}$ can take continuous values, we apply the relation $-\tilde{D}^{\tilde{\mathcal{S}}}(t)\geq VU_{tot}(\vec{r^*})$ given in Theorem 1 of \cite{Longbo2013Utility} and Claim 1 of \cite{neely2006energy}. This means the maximization of the total utility in a single time slot will be larger than that in time average,
\begin{equation}\label{eq:deltat}
\Delta(t)-V\sum_{a,b\in \mathcal{N}}U_a^b(R_a^b(t))\leq \tilde{B}- VU_{tot}(\vec{r^*}).
\end{equation}
Then we sum over Eq.~\eqref{eq:deltat} from $t=0$ to $t=\tau-1$,
\begin{equation}
L(\tau)-L(0)-V\sum_{t=0}^{\tau-1} \sum_{a,b\in \mathcal{N}}U_a^b(R_a^b(t)) \leq \tau\tilde{B}- \tau VU_{tot}(\vec{r^*}).
\end{equation}
Divide $V\tau$ in both sides and use $L(\tau)\geq 0$ and $L(0)=0$,
\begin{equation}
\begin{aligned}
\sum_{a,b\in \mathcal{N}} U_a^b(\frac{1}{\tau}\sum_{t=0}^{\tau-1}R_a^b(t))&\geq \frac{1}{\tau}\sum_{t=0}^{\tau-1} \sum_{a,b\in \mathcal{N}}U_a^b(R_a^b(t)) \\
&\geq  U_{tot}(\vec{r^*})-\frac{\tilde{B}}{V},
\end{aligned}
\end{equation}
where the first inequality is because the utility function is concave. Then we take a $\mathrm{\liminf}$ as $\tau \rightarrow \infty$ and have
\begin{equation}
\liminf_{\tau \rightarrow \infty}\sum_{a,b} U_a^b(r_a^b(\tau ))\geq U_{tot}(\vec{r^*})-\frac{\tilde{B}}{V}.
\end{equation}

\acknowledgments
The authors would like to thank T.~Chen, Y.~Liu, and S.~Yao for enlightening discussions. This work is supported by the National Natural Science Foundation of China Grants No.~11875173 and No.~11674193 and the National Key R\&D Program of China Grants No.~2017YFA0303900 and No.~2017YFA0304004.

\bibliographystyle{IEEEtran}

\bibliography{bibqkdnetwork}

\end{document}